\theoremstyle{plain}
\newtheorem{theorem}{Theorem}[section]
\newtheorem{definition}[theorem]{Definition}
\theoremstyle{definition}
\newcommand{\QQ}{\ensuremath{\mathbf Q}}
\newcommand{\RR}{\ensuremath{\mathbf R}}
\newcommand{\R}{\ensuremath{\mathbb R}}
\newcommand{\PP}{\ensuremath{\mathbf{P}}}
\newcommand{\cV}{\ensuremath{\mathcal{V}}}
\newcommand{\W}{\ensuremath{\mathcal{W}}}
\newcommand{\cT}{\ensuremath{\mathcal{T}}}
\def\argmin{\mathop{\rm argmin}\limits}%
\def\argsup{\mathop{\rm argsup}\limits}%
\newcommand{\Exp}{\ensuremath{\text{Exp}}}
\newcommand{\Log}{\ensuremath{\text{Log}}}
\title{Monge-Kantorovich quantiles and ranks for image data}
\date{}
\begin{document}

\author{Gauthier Thurin\thanks{Corresponding author: \texttt{gauthier-louis.thurin@math.u-bordeaux.fr}}}

\affil{Institut de Math\'ematiques de Bordeaux,
Universit\'e de Bordeaux, CNRS UMR 5251 \\ 
351 Cours de la Lib\'eration, 33400 Talence cedex, France}

\maketitle

\vspace{-.5cm}

\begin{abstract}
This paper defines quantiles, ranks and statistical depths for image data by leveraging ideas from measure transportation. 
The first step is to embed a distribution of images in a tangent space, with the framework of linear optimal transport. 
Therein, Monge-Kantorovich quantiles are shown to provide a meaningful ordering of image data, with outward images having unusual shapes. 
Numerical experiments showcase the relevance of the proposed procedure, for descriptive analysis, outlier detection or statistical testing. 

\end{abstract}

\textbf{Keywords:} Monge-Kantorovich statistical depth ; Linear Optimal Transport.

\section{Introduction}

\subsection{Ordering multivariate data}

The notion of quantiles is fundamental to characterize and describe a probability distribution on the real line. 
At the core of its definition lies the left-to-right ordering of $\R$. 
Unfortunately, this does not canonically extend in higher dimension, because of the numerous ways to sort vectors within a point cloud \cite{chaudhuri1996geometric,chernozhukov2015mongekantorovich}. 
Of course, this issue becomes even more complicated for image data, due to the inherent non-linear structure. 
Hence, this paper
aims to provide a meaningful ordering between typical and atypical images, as exemplified in Figure \ref{Fig_extremes_digits_intro}. 

\begin{figure}[h]
\centering
\includegraphics[width=0.48\textwidth]{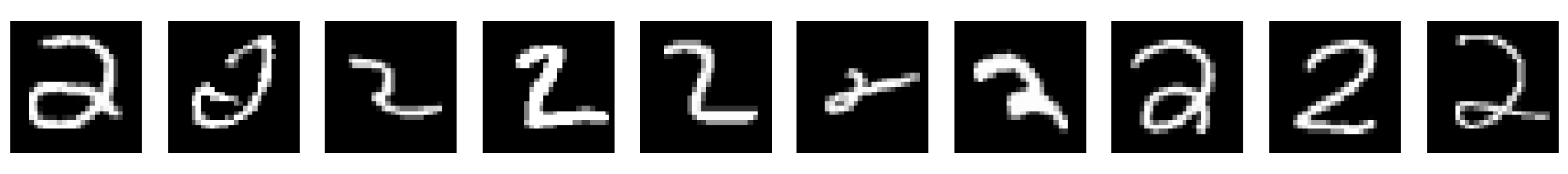}
\quad
\includegraphics[width=0.48\textwidth]{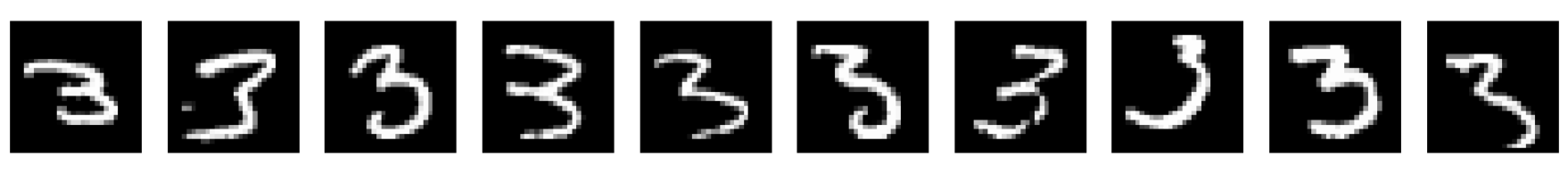}

\includegraphics[width=0.48\textwidth]{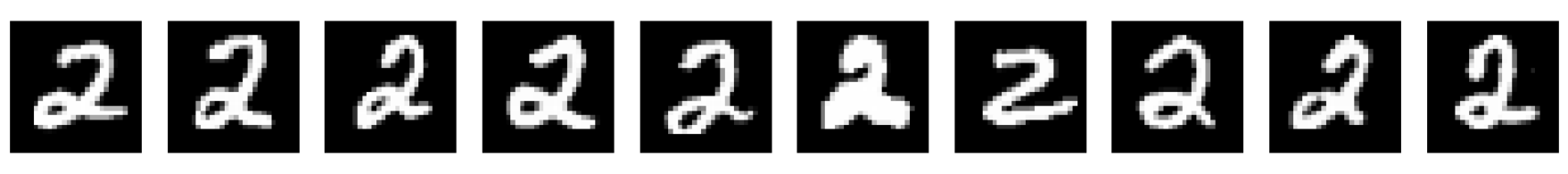}
\quad
\includegraphics[width=0.48\textwidth]{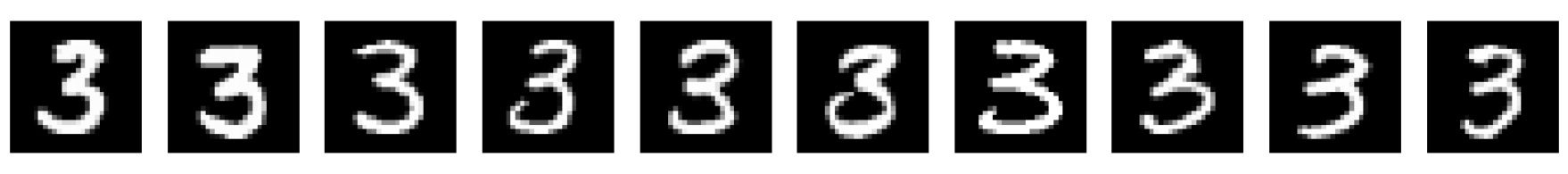}

\caption{Extreme (first row) and central (second row) images for two distributions.}
\label{Fig_extremes_digits_intro}
\end{figure}

As a general rule, one is willing to associate to each observation a measure of centrality with respect to some dataset, in view of descriptive analysis \cite{zuo2000general}, outlier detection \cite{chen2008outlier}, or statistical testing \cite{ghosal2022multivariate}, to name but a few.  
Most traditional concepts to order multivariate data include the Mahalanobis distance and the halfspace depth \cite{tukey75}, under Gaussian or convex assumptions, respectively. 
Other approaches that overcome these assumptions have gained interest among recent years, namely
spatial \cite{chaudhuri1996geometric} and Monge-Kantorovich (MK) quantiles \cite{chernozhukov2015mongekantorovich}. 
While both notions share a number of appealing features, MK quantiles better adapt to the shape of the underlying distribution \cite{girardhal00865767,hallinmultivariate}.
We refer to \cite{hallin2022measure} for an overview of applications of MK quantiles, depth, ranks and signs, together with insights about their desirable properties. 
Although MK quantiles have been defined on nonlinear manifolds \cite{hallin2024quantiles,hallin2024nonparametric}, their extension for image data remains, to the best of our knowledge, an open question.

\subsection{Images as histograms}

\paragraph{The Wasserstein space.}
The first step of our analysis is the choice of a metric between images.  
Up to normalization, images can be seen as histograms (or densities) on a set of pixels $\Omega$. 
Hence, each image $I$ is associated with a probability measure $d\rho(x) = I(x) dx$ in the set of square integrable distributions $\mathcal{P}_2(\Omega)$. 
This brings naturally the Wasserstein distance $\W_2$ that encodes information related with human perception \cite{peyre2019computational}. 
Thus, our goal amounts to define quantiles in the Wasserstein space $(\mathcal{P}_2,\W_2)$.
In order to \textit{linearize} this space while preserving the Wasserstein geometry, we resort to the Linear Optimal Transport (LOT) framework \cite{wang2013linear} that comes with several practical and theoretical advantages.

\paragraph{Linear Optimal Transport.}

LOT is a promising framework for distributional data that leverages the structure of the Wasserstein distance \cite{wang2013linear}.
Its main virtue is to embed distributions in a Hilbert space, allowing to apply Euclidean methods (e.g. PCA) quite simply, as evidenced by existing works 
\cite{basu2014detecting,cazelles2018geodesic,gachon2024low,khurana2023supervised,kolouri2015transport,kolouri2016continuous,martin2024data,moosmuller2023linear,park2018representing}.
Its construction comes from a Riemannian interpretation of $(\mathcal{P}_2(\Omega),\W_2)$ where, at any distribution $\rho \in \mathcal{P}_2(\Omega)$, a
 tangent space is given by a subset of $L^2(\Omega,\rho)$, the set of square integrable mappings \cite{ambrosio2008gradient}. 
 Therefore, embeddings in such tangent spaces must locally approximate the structure of $(\mathcal{P}_2(\Omega),\W_2)$. 

%

Transforming image data in a Hilbert space is a natural idea, and several methods exist for this task,  as in
\cite{Petersen2016} for univariate densities. 
However, LOT is especially appealing in view of exploratory analysis. 
Unlike other popular feature spaces (Fourier or Wavelets), it is not obtained via a linear operator \cite{martin2024data}. 
As opposed to deep-learning based features, the Euclidean distance between embeddings is an upper bound of the Wasserstein distance between images.   
Even more, and one can show bi-Hölder equivalence under some assumptions, 
\cite{delalande2023quantitative}. 

\subsection{Generalized quantiles.}

Other concepts of quantiles or statistical depth exist for complex structures, for functional data \cite{GijbelsNagy2017} or in metric spaces \cite{liu2022quantiles}. 
Most related to us is a very recent extension of the spatial depth for distributional data \cite{bachoc2024wasserstein}, that captures the geometry of $(\mathcal{P}_2(\Omega),\W_2)$.  
This valuable approach differs from our proposal in that it does not come with a concept of quantiles and it requires more computation.
Indeed, starting from $n$ images (or distributions) $I_1,\cdots,I_n$, it is required to compute $n$ OT problems to obtain the depth (or centrality) of any new image $I_{\rm new}$. 
In contrast, in our approach, the main computational cost concerns the embedding of $I_1,\cdots,I_n$, which requires to solve $n$ OT problems \textit{once for all}. 
After that, when facing a new image $I_{\rm new}$, a single OT problem is required to embed it in the LOT space and obtain its MK depth. 
In addition, MK quantiles and ranks benefit from specific attractive properties compared to spatial counterparts \cite{hallinmultivariate}, as for statistical tests \cite{deb2023multivariate} which are explored below. 

\subsection{Contributions and outline}

Our main contribution is to define MK quantiles for image data equipped with the Wasserstein distance. 
This comes with the associated notions of ranks and statistical depth \cite{chernozhukov2015mongekantorovich}. 
To do so, we make use of the LOT framework \cite{wang2013linear} to embed images in tangent spaces of $(\mathcal{P}_2,\W_2)$. 
Therein, we resort to point clouds in a high-dimensional linear space, where Euclidean definitions can be considered.  
This approach is illustrated in Figure \ref{Fig_illustration}.

\begin{figure}[h]
\centering
\includegraphics[width=0.95\textwidth]{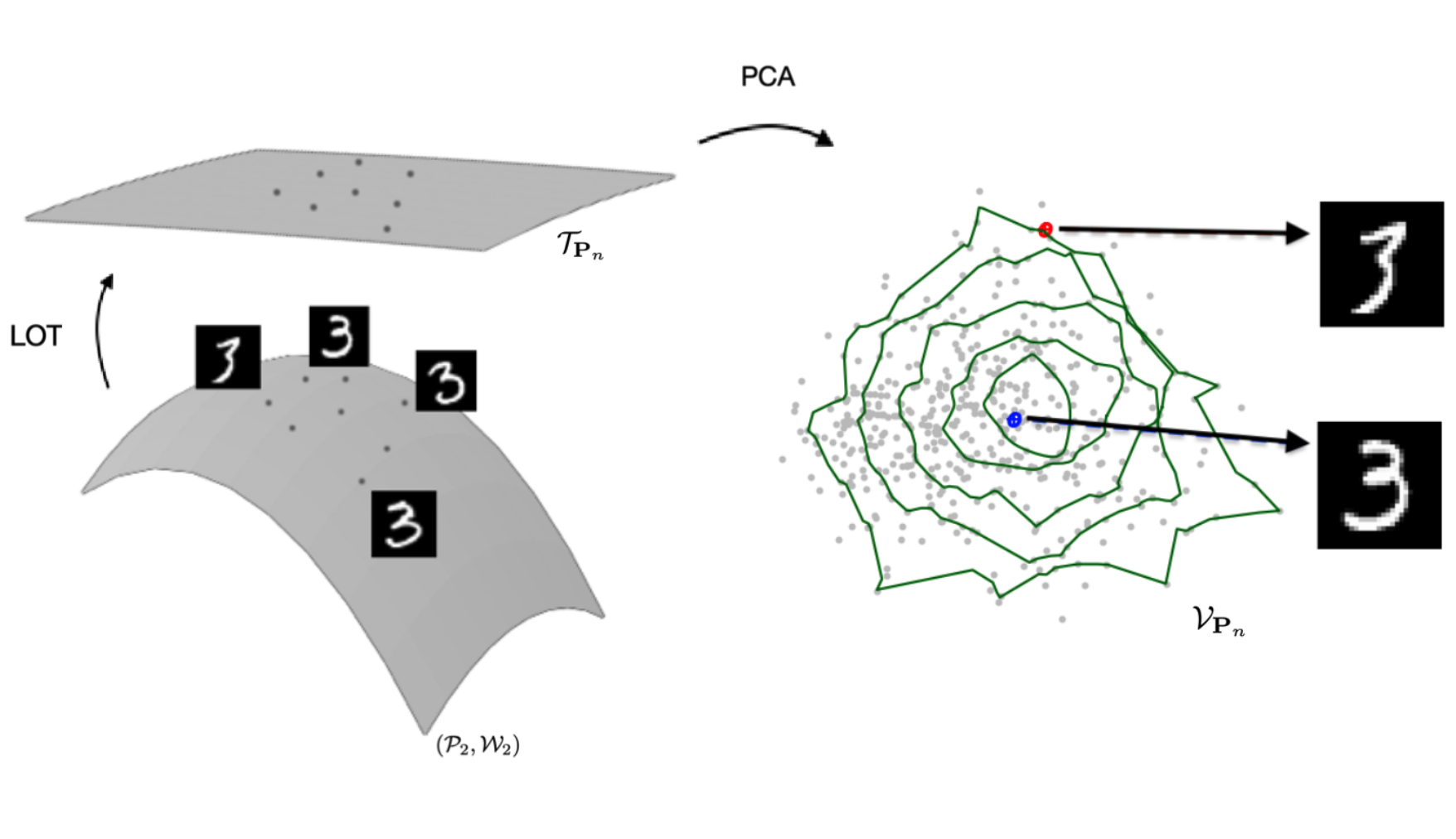}

\caption{MK quantiles in Log-PCA space, sorting images from common to atypical.}
\label{Fig_illustration}
\end{figure}

In Section \ref{MKquantiesinLOTspace}, our formalism and definitions are introduced. 
For the sake of parcimony, principal components analysis (PCA) is performed before considering MK quantiles. 
The potential loss of information is taken into account by a complementary statistical depth. 
In Section \ref{TheoreticalResults}, the consistency of our quantile function is studied. 
In Section \ref{NumExp}, several applications are pursued, 
from the very basis of descriptive analysis to outlier detection and statistical testing. 
For the sake of reproducibility, the codes used in the numerical experiments are made available at \url{https://github.com/gauthierthurin/QuantilesForImages}.


\section{MK quantiles in Log-PCA space}\label{MKquantiesinLOTspace}

\subsection{LOT embedding.} 

An image $I$ is viewed as a density distribution over a set $\Omega=\{ \omega_i\}_{i=1}^p\subset \R^2$ of $p$ pixels, that requires to normalize intensities of all pixels to sum to one. 
Equating densities to the corresponding measures in $\mathcal{P}_2(\Omega)$, a meaningful distance between two images $I$ and $J$ over $\Omega$ is the Wasserstein distance \cite{peyre2019computational}. 
It is defined by
\begin{equation}\label{MongeOT}
    \W_2^2(I,J) = \min_{T_\#I = J} \sum_{i=1}^p \Vert \omega_i - T(\omega_i)\Vert^2 I(\omega_i),
\end{equation}
where $T_\#I = J$ means that
$$
\forall j \in \{1,\cdots,p\}, \quad J(\omega_j) = \sum_{i: T(\omega_i) = \omega_j} I(\omega_i).
$$
Consider an empirical distribution $\PP_n$ over a set of images $\{I_1,\cdots,I_n\}$ in $\mathcal{P}_2(\Omega)$. 
The LOT embedding requires setting a template image $I_r$, typically chosen as an average \cite{park2018representing,wang2013linear}, such as the empirical Wasserstein barycenter.
Another alternative is $\overline{I}$ the pixel-wise empirical average, or,  
to ensure that it belongs to the underlying distribution, 
\begin{equation}\label{AvgImage}
    I_r = \argmin_{j \in \{1,\cdots,n\}} \Vert I_j - \overline{I} \Vert_2. 
\end{equation}
For $d\rho(x) = I_r(x) dx$ the associated measure, 
define the Logarithm map 
\begin{equation}\label{Logmap}
    \Log_{\PP_n}: I \mapsto T_I - \mbox{id},
\end{equation}
where $T_I$ is the Monge map that minimizes \eqref{MongeOT} between $I_r$ and $I$. 
Precisely, $\Log_{\PP_n}$ maps each image $I_k$ to $v_k = T_{I_k} - \mbox{id}$ inside the tangent space 
\begin{equation*}
\cT_{\PP_n} = \Big\{ v : \Omega \rightarrow \R^2 ; 
\sum_{i=1}^p \Vert v(\omega_i) \Vert^2 I_r(\omega_i) < +\infty \Big\}.
\end{equation*}
Note that, as opposed to the continuous case, the sum above is always finite for $p$ pixels, as soon as $v$ does not take infinite values. 
Our notations $\Log_{\PP_n}$ and $\cT_{\PP_n}$ emphasize that we consider one tangent space for each distribution ${\PP_n}$ of images. 
Crucially, in the Hilbert space $\cT_{\PP_n}$, the $2$-norm is an approximation of the Wasserstein distance, \cite{delalande2023quantitative}.
In practice, each $v\in \cT_{\PP_n}$ is assimilated to a vector of dimension $2p$. 
To obtain a lower-dimensional representation, it is common to use PCA \cite{cazelles2018geodesic,park2018representing,wang2013linear} as it preserves as much information as possible relatively to the $2$-norm in $\cT_{\PP_n}$. 
Doing so, we model vectors in $\cT_{\PP_n}$ as 
$
v = m + WX,
$
where $X$ belongs to a latent space $\cV_{\PP_n}$ of dimension $d<2p$, and $m$ and $W$ are obtained by PCA\footnote{Namely $m$ is the empirical mean of $\{v_i\}_{i=1}^n$ and $W = U\Lambda^{1/2}$ for $\Lambda$ diagonal with the $d$ largest eigenvalues of $\{v_i\}_{i=1}^n$ in decreasing order and $U$ contains the associated $d$ eigenvectors in columns.} on $\{ v_1,\cdots,v_n\}\subset \cT_{\PP_n}$. 
Hereafter, the final embedding vectors are elements of the Log-PCA space 
\begin{equation}
    \cV_{\PP_n} = \big\{ x \in \mathbb{R}^{d}: \exists v \in \cT_{\PP_n},\, x = W^{T}(v - m) \big\}.
\end{equation}

\subsection{Monge-Kantorovich quantiles.} 

Our definition of quantiles for a distribution ${\PP_n}$ of images corresponds to computing MK quantiles \cite{chernozhukov2015mongekantorovich,Hallin_AOS_2021} in the Log-PCA space $\cV_{\PP_n}$. 
Therein, a reference distribution $\mu$ must be chosen, that shall represent an ideal case.
Here, the latter is chosen as the spherical uniform, \cite{chernozhukov2015mongekantorovich,Hallin_AOS_2021} that is defined as the product $R\Phi$ between two independent random variables, uniformly drawn on $[0,1]$ and on the unit sphere $\{ z \in \mathbb{R}^d: \Vert z \Vert =1\}$, respectively.
With this choice,  
nested balls $\mathbb{B}(0,\alpha)$ of radius $\alpha$ are convenient reference quantile regions because $\mu(\mathbb{B}(0,\alpha))=\alpha$. 

Denote by $X_1,\cdots,X_n$ the representation in $\cV_{\PP_n}$ of the images $I_1,\cdots,I_n$. 
A dual version of Kantorovich OT problem writes
\begin{equation*}
    \psi_n = \argmin_{\varphi\in\Phi_0}  \int \varphi(u)d \mu(u)  + \frac{1}{n}\sum_{i=1}^n \varphi^*(X_i),
\end{equation*}
with $\varphi^*(x) = \sup_{u} \{\langle x,u \rangle - \varphi(u)\}$ the Legendre-Fenchel transform of $\varphi$ and $\Phi_0 = \{\varphi : (\varphi^*)^*=\varphi \}$.
Then, the MK rank function is given by 
\begin{equation}
    \RR_n(x) = \argsup_{ u: \Vert u \Vert \leq 1 } \big\{ \langle u,x \rangle - \psi_n(u) \big\},\label{defRn}
\end{equation}
and the MK quantile function is defined by 
\begin{equation}
    \QQ_n(u) = \argsup_{ x\in \{X_1,\cdots,X_n\} } \big\{ \langle u,x \rangle - \psi_n^*(x) \big\}.\label{defQn}
\end{equation}
This follows the convention of \cite{chernozhukov2015mongekantorovich,ghosal2022multivariate}, even if 
$\RR_n$ is called a distribution function in \cite{Hallin_AOS_2021}. 
Intuitively, $\RR_n$ associates to any $x$ a rank whose position describes the outlyingness of $x$, hence it plays the role of univariate ranks $\{1,\cdots,n\}$ but with added directional information. 
Denote by $\Exp_{\PP_N}$ the inverse mapping of $\Log_{\PP_N}$ in \eqref{Logmap}. 
We emphasize that, by construction, the image by $\Exp_{\PP_n} \big( \QQ_n (u)  \big)$ of $\mathbb{B}(0,\alpha)$ belongs to $(\mathcal{P}_2(\Omega),\W_2)$ and
has $\PP_n$-probability $\alpha$.
In other words, a proportion $\alpha$ of images lies in this region, making it an appropriate \textit{quantile region}. 
We now turn to the corresponding MK statistical depth \cite{chernozhukov2015mongekantorovich}.

\subsection{Inner and outer statistical depth.}

A statistical depth associates to each point $x$ a measure of centrality $D(x,\nu)\in \mathbb{R}$ with respect to a distribution $\nu$.
In the line of the LOT framework and for the sake of parcimony, we consider two different statistical depth functions. 
Put simply, the inner depth is dedicated to describe the variability within the Log-PCA space $\cV_{\PP_n}$, whereas the outer depth is dedicated to capture the outlyingness with respect to $\cV_{\PP_n}$. 
We shall now motivate this two-steps construction. 

The Monge-Kantorovich depth \cite{chernozhukov2015mongekantorovich} of $x \in \R^d$ with respect to ${\PP_n}$ is defined by the Tukey depth \cite{tukey75} of the rank $\RR_n(x)$ with respect to the reference $\mu$,
\begin{equation}\label{MKdepth}
D^{MK}(x,{\PP_n}) = D^{Tukey}\big( \RR_n(x),\mu\big).
\end{equation}
This only depends on $\Vert \RR_n(x) \Vert$, as shown by the explicit form of $D^{Tukey}( \cdot,\mu)$ in \cite{chernozhukov2015mongekantorovich} for $\mu$ the spherical uniform distribution.  
This definition straightforwardly applies within $\cV_{\PP_n}$, giving rise to the following. 
\begin{definition}[Inner depth, within one subspace]\label{innerdepth}
For a distribution ${\PP_n} = \frac{1}{n}\sum_{j=1}^n \delta_{I_j}$ of images in $(\mathcal{P}_2(\Omega),\W_2)$, an inner depth function is defined by 
$$
D^{i}(I,{\PP_n}) = D^{MK}(x,{\PP_n}),
$$
for $x = W^{T}(\Log_{\PP_n}(I) - m)$ the embedding of $I$ in the Log-PCA space $\cV_{\PP_n}$. 
\end{definition}

However, due to the projection step into $\cV_{\PP_n}$, the inner depth may show irrelevant results for out-of-distribution examples.
Namely, it is possible that an outlier in $\cT_{\PP_n}$ is projected by PCA in central regions of the distribution in $\cV_{\PP_n}$. 
In fact, the inner depth only captures the centrality within $\cV_{\PP_n} \subset \cT_{\PP_n}$.
The outer depth complements this by rendering outlyingness with respect to $\cV_{\PP_n}$.

\begin{definition}[Outer depth, between subspaces]\label{outerdepth}
For a distribution ${\PP_n} = \frac{1}{n}\sum_{j=1}^n \delta_{I_j}$ of images in $(\mathcal{P}_2(\Omega),\W_2)$, an outer depth function is defined by 
$$
D^{o}(I,{\PP_n}) = \Big( 1+ d\big(\Log_{\PP_n}(I),\cV_{\PP_n}\big) \Big)^{-1},
$$
with  
$$
d(\Log_{\PP_n}(I),\cV_{\PP_n}) = \Vert  \Log_{\PP_n}(I) - P_{\cV_{\PP_n}}\big(\Log_{\PP_n}(I)\big) \Vert_2,
$$
the distance between the embedding $\Log_{\PP_n}(I)$ and $\cV_{\PP_n}$, and where $P_{\cV_{\PP_n}}(\cdot)$ is the orthogonal projection from $\cT_{\PP_n}$ to $\cV_{\PP_n}$. 
\end{definition}

\section{Convergence of empirical counterparts}\label{TheoreticalResults}

For the sake of simplicity, the estimation of PCA is neglected throughout this section. 
In other words, we now suppose that $d=2p$ for $p$ pixels and that the PCA explains full variance.
Assume also that the reference image $I_r$ is learnt on $N$ images, when $N$ is fixed (although in practice $N=n$) and let $\PP_N$ be the associated empirical measure.

Consider now images $I_1,\cdots,I_n$ that are also drawn $i.i.d.$ from $\PP$ and let $\PP_n$ be the associated empirical measure. 
Define $\nu_n$ and $\nu$ the respective push-forward distributions $\Log_{\PP_N}\# \PP_n $ and $\Log_{\PP_N}\# \PP$. 
These are the embedded versions of $\PP_n$ and $\PP$ up to the choice of the tangent space $\cT_{\PP_N}$. 
Necessarily, the weak convergence $\PP_n \overset{w}{\longrightarrow} \PP$ holds, as well as $\nu_n \overset{w}{\longrightarrow} \nu$, with $N$ held fixed. 

Note that, in general, $T_\#\mu=\nu$ means that $T(U) \sim \nu$ if $U\sim\mu$. A continuous definition of MK quantiles reads as follows.
\begin{definition}[MK quantiles and ranks \cite{Hallin_AOS_2021}]\label{defMKquantiles}
    Let $\nu$ be a Lebesgue-absolutely continuous distribution over $\mathbb{R}^d$. 
    The MK quantile function of $\nu$ is the $a.e.$ unique mapping such that $\QQ_\#\mu=\nu$ and $\QQ=\nabla\psi$ for some $\psi:\mathbb{R}^d\rightarrow \mathbb{R}$.
    Similarly, the MK rank function of $\nu$ is $\RR(x) = \nabla \psi^*(x)$, for $\psi^*(x) = \sup_{u: \Vert u \Vert \leq 1} (\langle x,u\rangle - \psi(u) )$.
\end{definition}
Let $\QQ$ and $\RR$ be respectively the MK quantile and rank functions of $\nu$ in $\cT_{\PP_N}$ of dimension $d=2p$. 
Here, one might note that the distributions $\nu$ and $\nu_n$ (for all $n$) are compactly supported, because $\Log_{\PP_N}$ is uniformly bounded. 
Indeed, by compactness of the finite $\Omega$, any $T: \Omega\rightarrow \Omega$ is uniformly bounded.

Under such compacity, MK quantiles $\QQ_n$ and ranks $\RR_n$ benefit from uniform consistency on compact subsets \cite{chernozhukov2015mongekantorovich,Hallin_AOS_2021}. 
Based on these existing results, the next theorem describes the convergence of MK quantiles and ranks in LOT space in terms of Wasserstein distance in the image space.

\begin{theorem}\label{thm1}
Suppose that $\nu$ is a continuous distribution. 
For any $u\in\mathbb{B}(0,1)$, denote by $\overline{\QQ}_n(u) = \Exp_{\PP_N}\big(\QQ_n(u)\big)$ and $
\overline{\QQ}(u) = \Exp_{\PP_N}\big(\QQ(u)\big) $.
Also, for any image $I$, let
$\overline{\RR}_n(I) =  \RR_n\big(\Log_{\PP_N}(I) \big)$ 
and $\overline{\RR}(I) =  \RR\big(\Log_{\PP_N}(I) \big)$.
Then, for any compact $K\subset \mathbb{B}(0,1)$, it holds that
    \begin{equation*}
        \lim_{n\rightarrow 0} \sup_{u \in K} W_2^2\big(\overline{\QQ}_n(u),\overline{\QQ}(u)\big) = 0.
    \end{equation*}
   Moreover, for any $K'\subset\mathcal{P}_2(\Omega)$ such that $\Log_{\PP_N}(K')$ is compact in $\cT_{\PP_N}$,
    \begin{equation*}
        \lim_{n\rightarrow 0} \sup_{I \in K'} \Vert \overline{\RR}_n(I) - \overline{\RR}(I) \Vert_{\R^{2p}} = 0,
    \end{equation*}
    and 
    \begin{equation*}
        \lim_{n\rightarrow 0} \sup_{I \in K'} \vert D^{i}(I,{\PP_n}) - D^{i}(I,{\PP}) \vert = 0.
    \end{equation*}
\end{theorem}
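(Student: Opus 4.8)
The plan is to reduce everything to known uniform-consistency results for Monge--Kantorovich quantiles and ranks on compact sets (as cited, \cite{chernozhukov2015mongekantorovich,Hallin_AOS_2021}), and then to transport these estimates through the maps $\Exp_{\PP_N}$ and $\Log_{\PP_N}$, which are fixed once $N$ is held constant. The starting observation is that $\nu_n \overset{w}{\longrightarrow} \nu$, both distributions are compactly supported in $\cT_{\PP_N}=\R^{2p}$ (because any $T:\Omega\to\Omega$ is uniformly bounded), and $\nu$ is continuous by hypothesis; hence the cited results give, for every compact $K\subset\mathbb{B}(0,1)$, $\sup_{u\in K}\Vert \QQ_n(u)-\QQ(u)\Vert_{\R^{2p}}\to 0$, and for every compact $L\subset\R^{2p}$, $\sup_{x\in L}\Vert \RR_n(x)-\RR(x)\Vert_{\R^{2p}}\to 0$. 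These are the analytic inputs; the remaining work is geometric bookkeeping.

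First I would handle the rank statement and the inner-depth statement together. Fix $K'$ with $L:=\Log_{\PP_N}(K')$ compact in $\cT_{\PP_N}$. Then $\sup_{I\in K'}\Vert\overline{\RR}_n(I)-\overline{\RR}(I)\Vert_{\R^{2p}} = \sup_{x\in L}\Vert\RR_n(x)-\RR(x)\Vert_{\R^{2p}}\to 0$ directly. For the inner depth, recall $D^{i}(I,\PP_n)=D^{MK}(x,\PP_n)=D^{Tukey}(\RR_n(x),\mu)$ with $x=\Log_{\PP_N}(I)$ (in this section $d=2p$, $m=0$, $W=\mathrm{id}$). Since for the spherical uniform $\mu$ the Tukey depth $D^{Tukey}(\cdot,\mu)$ is an explicit function of the norm only (as noted after \eqref{MKdepth}), and that function is continuous, I would combine uniform convergence of $\RR_n\to\RR$ on $L$ with uniform continuity of $z\mapsto D^{Tukey}(z,\mu)$ on the compact set $\RR(L)\cup\bigcup_n\RR_n(L)$ (which is bounded since ranks land in $\mathbb{B}(0,1)$) to conclude $\sup_{I\in K'}\vert D^i(I,\PP_n)-D^i(I,\PP)\vert\to 0$.

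Next, the quantile statement. Here I would use that $\Exp_{\PP_N}$ sends a vector $v\in\cT_{\PP_N}$ to the measure $(\mathrm{id}+v)_\#I_r \in \mathcal{P}_2(\Omega)$, and that the map $v\mapsto (\mathrm{id}+v)_\# I_r$ is $1$-Lipschitz from $(\cT_{\PP_N},\Vert\cdot\Vert_2)$ into $(\mathcal{P}_2(\Omega),\W_2)$: indeed $(\mathrm{id}+v)$ and $(\mathrm{id}+v')$ form an admissible (not necessarily optimal) coupling of the two push-forwards, so $\W_2^2((\mathrm{id}+v)_\#I_r,(\mathrm{id}+v')_\#I_r)\le \sum_i\Vert v(\omega_i)-v'(\omega_i)\Vert^2 I_r(\omega_i)=\Vert v-v'\Vert_2^2$. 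Applying this with $v=\QQ_n(u)$, $v'=\QQ(u)$ gives $\sup_{u\in K}\W_2^2(\overline{\QQ}_n(u),\overline{\QQ}(u))\le \sup_{u\in K}\Vert\QQ_n(u)-\QQ(u)\Vert_{\R^{2p}}^2\to 0$, which is exactly the claim.

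The main obstacle is the first reduction: one must make sure that the cited uniform-consistency theorems genuinely apply to $\nu_n,\nu$ here — in particular that $\nu$ is Lebesgue-absolutely continuous on its support (not merely ``continuous'' as a measure) so that $\QQ$ is a well-defined a.e.-unique Brenier map, and that the convergence $\nu_n\overset{w}{\to}\nu$ together with uniform boundedness of supports yields \emph{uniform} (not just pointwise) convergence of $\QQ_n$ and $\RR_n$ on compacta. This is where the hypotheses ``$\nu$ continuous'' and the compactness of $\Omega$ are doing the real work, and I would spell out that $\Log_{\PP_N}$ being a fixed bounded map is what keeps $N$ out of the limit. The transport steps via $\Exp$ and $\Log$ are then routine Lipschitz estimates, and the inner-depth step is routine uniform continuity; I would keep those brief.
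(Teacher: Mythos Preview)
Your proposal is correct and follows essentially the same route as the paper: reduce to the cited uniform-consistency results for $\QQ_n$ and $\RR_n$ on compacta, push the quantile convergence through $\Exp_{\PP_N}$ via the $1$-Lipschitz bound $\W_2^2((\mathrm{id}+v)_\# I_r,(\mathrm{id}+v')_\# I_r)\le \Vert v-v'\Vert_2^2$ (the paper states this same inequality as ``a direct byproduct of optimality in \eqref{MongeOT}''), and handle ranks and inner depth by direct composition with $\Log_{\PP_N}$ and the continuous Tukey-depth map. Your write-up is in fact more explicit than the paper's, which dispatches the rank and depth parts in a single sentence citing Theorem~3.1 and Corollary~3.1 of \cite{chernozhukov2015mongekantorovich}.
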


\begin{proof}
The proof relies on the following well-known inequality, that is a direct byproduct of optimality in \eqref{MongeOT} combined with the definition of \eqref{Logmap},
$$
\W_2^2 \big( I,J \big)  \leq  \sum_{k=1}^p \Vert \Log_{\PP_N} (I)(\omega_k) - \Log_{\PP_N}(J)(\omega_k) \Vert^2 I_r(\omega_k).
$$
In our approach, the maps $\Log_{\PP_N} (I):\Omega \rightarrow \Omega$ are identified with the vectors $\big(\Log_{\PP_N} (I)(\omega_k)\big)_{k=1}^p \in \R^{2p}$.
Applying the above inequality, for any $u\in\mathbb{B}(0,1)$, 
\begin{equation}
        \W_2^2\left( \Exp_{\PP_N}\big(\QQ_n(u)\big), \Exp_{\PP_N}\big(\QQ(u)\big) \right) \leq \Vert\QQ_n(u) - \QQ(u) \Vert^2_{\R^{2p}}.
\end{equation}
    In addition, the right-hand size uniformly converges on any compact $K\subset \mathbb{B}(0,1)$ \cite{chernozhukov2015mongekantorovich}[Theorem 3.1], which yields the result. 
    Theorem 3.1 and Corollary 3.1 from \cite{chernozhukov2015mongekantorovich} also imply consistency for the ranks and the inner depth.

\end{proof}


\section{Numerical experiments}\label{NumExp}

\subsection{Datasets and implementation details}

LOT modeling implicitly assumes that a set of images is locally generated by a mass-preserving transformation from a template image \cite{park2018representing}. 
In accordance with these assumptions, we consider the following datasets of \textit{structural} images.  
The MNIST data set \cite{lecun1998gradient} is made of handwritten digits of size $28\times28$, with labels $k\in\{0,1,\cdots,9\}$. 
Two cell data sets \cite{basu2014detecting} are available within the PyTransKit package \cite{PyTransKit}, both with normal / cancer labels: $(i)$ The cell data set is made of $90$ images of size $64 \times 64$ and $(ii)$ the liver nuclear data set contains $20$ images of size $192\times 192$.
The chest X-rays dataset \cite{zunair2021synthesis} contains radiography images of size $256\times256$.
We only consider images labeled without COVID-19. 
To speed up LOT embedding, the X-ray images were reshaped to $100\times100$ pixels.  

Our procedure requires to solve several OT problems. 
LOT embeddings are learned with the simplex algorithm \cite{peyre2019computational}. 
To enforce continuity of the MK quantile function and of the inner depth, we use entropic regularization with a decreasing regularization parameter, with the implementation proposed in  \cite{kassraie2024progressive}. 

\subsection{Descriptive data analysis}

Quantiles and order statistics are among the most important tools in exploratory data analysis. 
With our concepts, a point $x_1$ is said to be deeper (or more central) than $x_2$ if
$
    D(x_1,{\PP_n}) \geq D(x_2,{\PP_n}),
$
where one can consider both the inner and the outer depth. 
For the analysis of a dataset $I_1,\cdots,I_n$ of images from the same distribution, the inner depth is appropriate. 
Then, order statistics can be defined by relabeling images so that 
\begin{equation}\label{ordstats}
     D^i \big( I_{(1)},{\PP_n} \big)  
    \geq  D^i \big( I_{(2)},{\PP_n} \big) 
    \geq \cdots \geq  D^i \big( I_{(n)},{\PP_n} \big).
\end{equation}
With this at hand, one can render the most extremes (resp. central) images within a distribution, as illustrated in Figures \ref{Fig_extremes_digits_intro} and \ref{Fig_extremes_Xray}.
In Figure 1, digits $2$ and $3$ are treated separately, each with $n=100$ samples. 
For each of these two distributions, the $10$ most central/ outward images with respect to the inner depth are represented. 
It illustrates that the center-outward ordering \eqref{ordstats} is consistent with intuition:
outward images have uncommon shape, as opposed to central ones.   
Figure \ref{Fig_extremes_Xray} shows results of the same experiment pursued on $n=265$ images from the chest X-ray dataset.
In Figure \ref{Fig_extremes_Xray}, one can observe that typical chest X-rays are better centered than atypical ones. 

\begin{figure}[h]
\centering
\includegraphics[width=\textwidth]{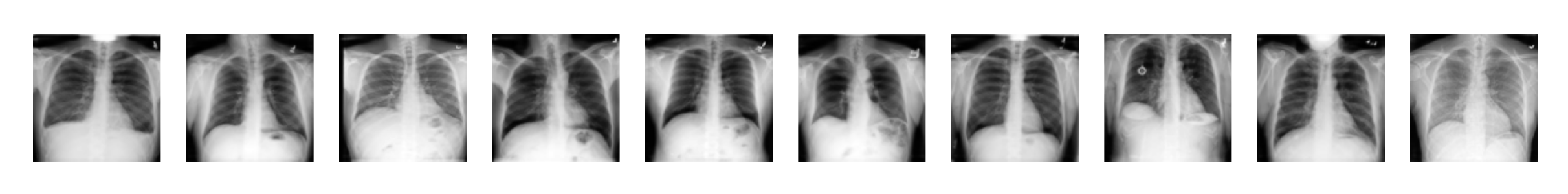}
\includegraphics[width=\textwidth]{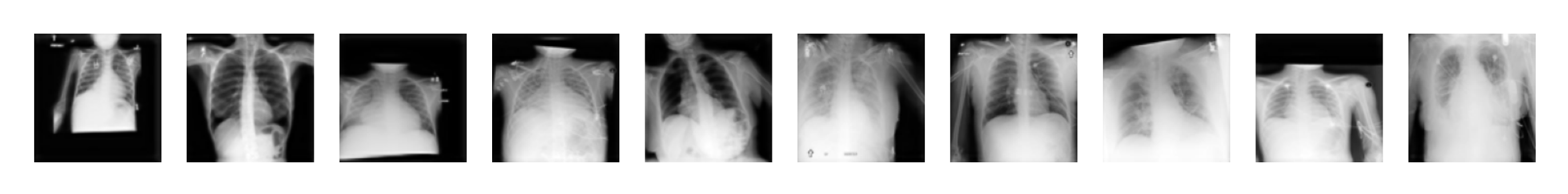}
\caption{Centrals (first row) and extremes (second row) for chest X-ray data.}
\label{Fig_extremes_Xray}
\end{figure}

Going further, note that a proportion $\alpha$ of the $n$ order statistics in \eqref{ordstats} is deeper than $I_{(\lceil \alpha n\rceil)}$.
As a byproduct, a summary of a dataset is provided by
\begin{equation*}
\{I_{(1)},I_{(\lceil 0.25 n\rceil)},I_{(\lceil 0.5 n\rceil)},I_{(\lceil 0.75 n\rceil)}, I_{(n)}\}.
\end{equation*}
This is exemplified in Figure \ref{Fig_summary}, and we stress that this mimics the traditional summary in dimension $d=1$, that includes the minimum, the quartiles, and the maximum. 
We also show depth values that range from $1/2$ for deepest points to $0$ for outward samples. 
For digits 2 and 3, $n=100$ samples were used, against $n=40$ for each class of cell data. 
This type of visualization provides an overview of both common and uncommon images within one class. 
For the cell dataset, one can observe that the major variation between classes corresponds to size and shape.

\begin{figure}[h]
\centering
\includegraphics[width=0.48\textwidth]{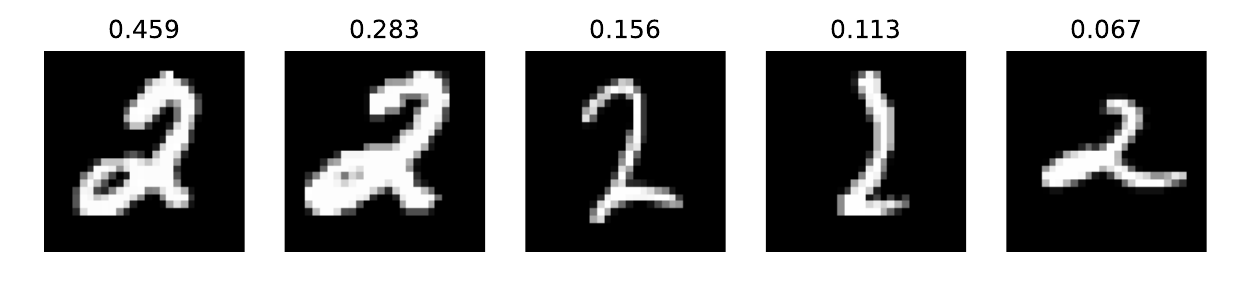}
\hspace{0.2cm}
\includegraphics[width=0.48\textwidth]{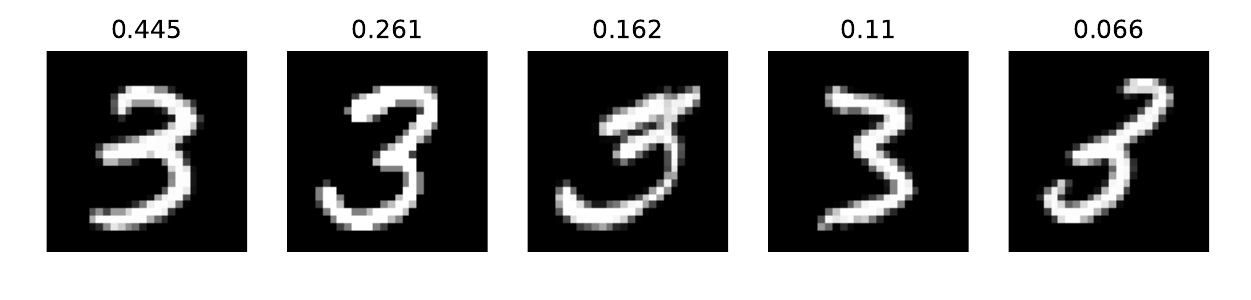}

\includegraphics[width=0.48\textwidth]{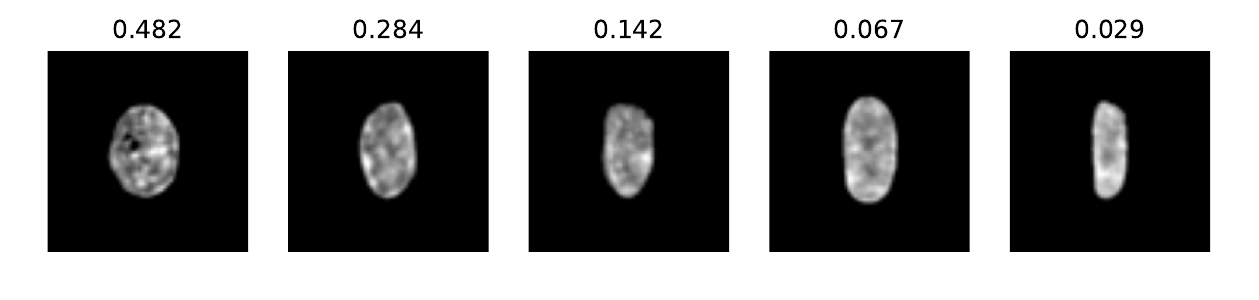}
\hspace{0.2cm}
\includegraphics[width=0.48\textwidth]{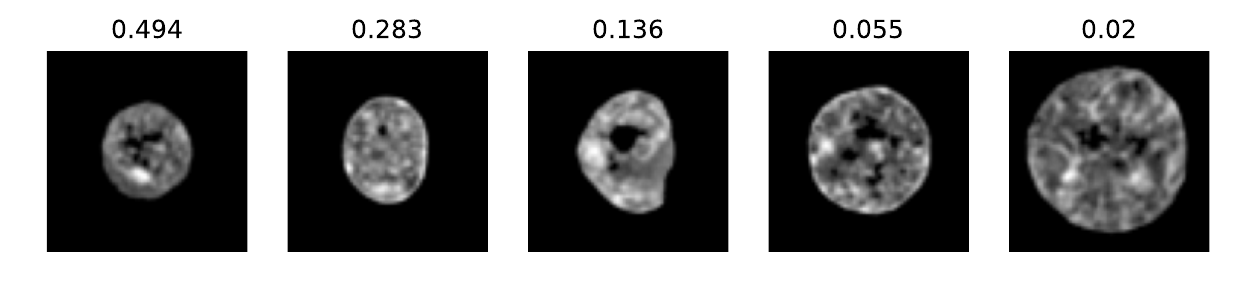}

\caption{
Summary statistics, for MNIST data (first row) and cell data (second row). 
For each data, two classes are treated as different distributions.}
\label{Fig_summary}
\end{figure}

\subsection{Testing equality of distributions}

A main virtue of MK quantiles is the distribution freeness of the associated ranks \cite{deb2023multivariate,Hallin_AOS_2021} that
is particularly appealing for statistical testing. 
Here, we extend for image data the two-sample Hotelling-Type test from \cite{deb2021pitman} that aims to answer
\begin{equation*}
    H_0 : \mu_1 = \mu_2 
    \hspace{1cm} \mbox{versus} \hspace{1cm}
    H_1 : \mu_1 \ne \mu_2.
\end{equation*}
This procedure departs from $m$ images $i.i.d.$ from $\mu_1$ and $n$ images $i.i.d.$ from $\mu_2$ that are considered as from the same distribution, under $H_0$. 
Hence, these are embedded in the Log-PCA space by taking the average in \eqref{AvgImage} within these $m+n$ images.
This yields 
samples $X_1,\cdots, X_m$ and $Y_1,\cdots,Y_n$ in $\R^d$ 
that form the empirical distribution
$
    \widehat{\mu}_{m+n} = \frac{1}{m+n} \sum_{i=1}^n \delta_{X_i} + \frac{1}{m+n} \sum_{j=1}^m \delta_{Y_j}.
$
In the Log-PCA space, we consider an isotropic Gaussian to be the reference distribution $\mu$, as it is beneficial for testing \cite{deb2021pitman}. 
We sample $m+n$ data points from $\mu$, and solve OT between this reference sample and $\widehat{\mu}_{m+n}$. 
Denoting by $\RR_{m,n}$ the induced MK rank map,
the test statistic is defined by 
\begin{equation*}
    T_{m,n} = \frac{mn}{m+n} \left\Vert \frac{1}{m} \sum_{i=1}^m \RR_{m,n}(X_i) - \frac{1}{n}\sum_{j=1}^n \RR_{m,n}(Y_j) \right\Vert^2.
\end{equation*}
From \cite{deb2021pitman}[Theorem 3.1], $T_{m,n}$ is asymptotically distributed according to $\chi_d^2$. 
Thus, denoting by $q_{1-\alpha}$ the $(1-\alpha)$-th quantile of the $\chi_d^2$ distribution, we reject $H_0$ if $T_{m,n} \geq q_{1-\alpha}$.
This results in a test that asymptotically controls the Type I error,
and that is consistent against a large class of alternative, \cite{deb2021pitman}.

\noindent Our numerical experiments estimate rejection rates $\mathbb{E}(\mathrm{1}_{\{T_{m,n} \geq q_{1-\alpha} \}})$ by taking the expectation over $N=50$ repetitions. 
Figure \ref{RejecRates} deals with low sample values: $m=n=50$ for MNIST, $m=n= 30$ for cell data and $m=n=10$ for liver nuclei data.  
All pairwise labels are considered for $\mu_1$ and $\mu_2$. 
One can observe the appropriate control under the null and the high power, except when comparing distributions of digits $5$ and $8$ and for the liver-nuclei data, when there are not enough observations.  

  \begin{figure}
    \centering
    \begin{tabular}{cc}
    \adjustbox{valign=b}{\subfloat[MNIST \label{subfig-1}]{%
          \includegraphics[width=5.2cm,height=4.6cm]{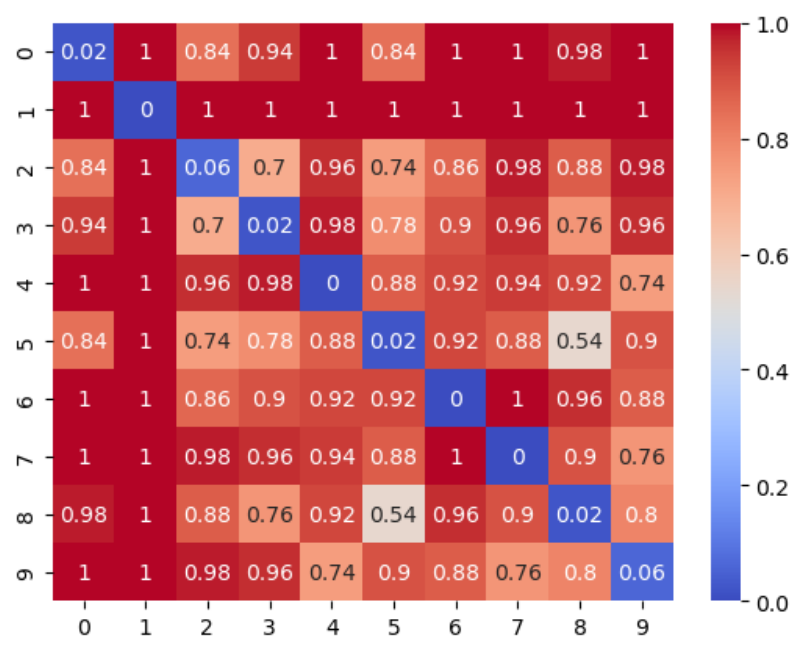}}}
    & 
    \adjustbox{valign=b}{\begin{tabular}{@{}c@{}}
    \subfloat[Cell data \label{subfig-2}]{%
          \includegraphics[width=0.2\linewidth,height=2cm]{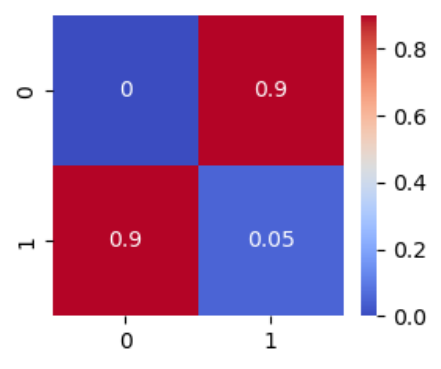}} \\
    \subfloat[Liver nuclei \label{subfig-3}]{%
          \includegraphics[width=0.2\linewidth,height=2cm]{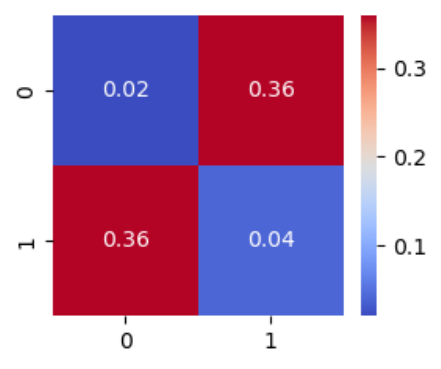}}
    \end{tabular}}
    \end{tabular}
    \caption{Rejection rates for Goodness-of-Fit testing for all pairwise labels. 
    }\label{RejecRates}
  \end{figure}

\subsection{Outlier Detection}

Outlier detection is an important application of statistical depths.
We depart from an unsupervised setting with one dataset of inliers, or correct images. 
After that we aim to discriminate new images between inliers and outliers. 

The procedure is as follows. 
We first embed the training data to learn MK quantiles. 
Then, to prevent from overfitting, we consider a calibration dataset, different from training instances. 
On these calibration data, we compute the order statistics \eqref{ordstats} for the inner and outer depth. 
For each depth, for $\alpha = 5\%$, an outlier will be an image with depth lower than the one of $I_{\lceil \alpha n\rceil}$. 
Figure \ref{DDplot} illustrates our methodology with a scatterplot of the outer depth as a function of the inner depth, for test data. 
We exemplify the two tresholds for the decision rule with dashed lines. 
One can observe that the inner and outer depth complement each other. 
On the left side of Figure \ref{DDplot}, the inliers correspond to digits $0$ from the MNIST dataset and outliers are taken from all other digits.  
On the right side, we added a few outliers to the chest X-ray data, that are radiography images of other organs. 
These outliers, detected by our procedure, are shown in Figure \ref{TruePositiveXray}. 

\begin{figure}[h]
\centering
\includegraphics[width=0.45\textwidth]{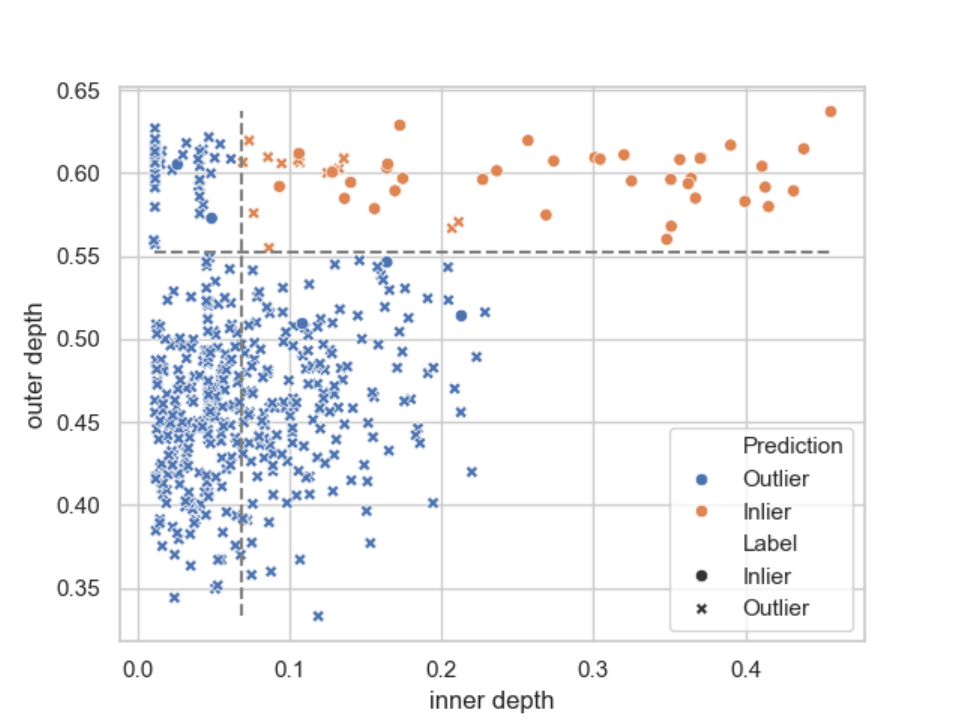}
\includegraphics[width=0.45\textwidth]{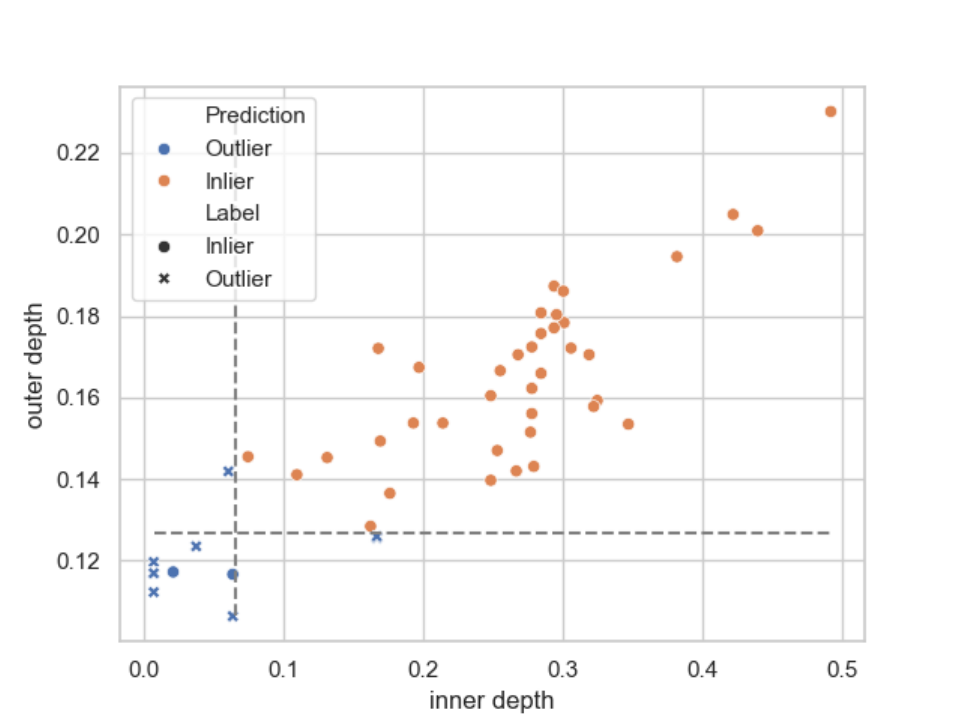}

\caption{Inner-outer DD plot for outlier detection. Dashed lines represent the tresholds to detect outliers. Left: MNIST data. Right: X-ray data.}
\label{DDplot}
\end{figure}

\begin{figure}[h]
\centering
\includegraphics[width=0.1\textwidth,height=0.1\textwidth]{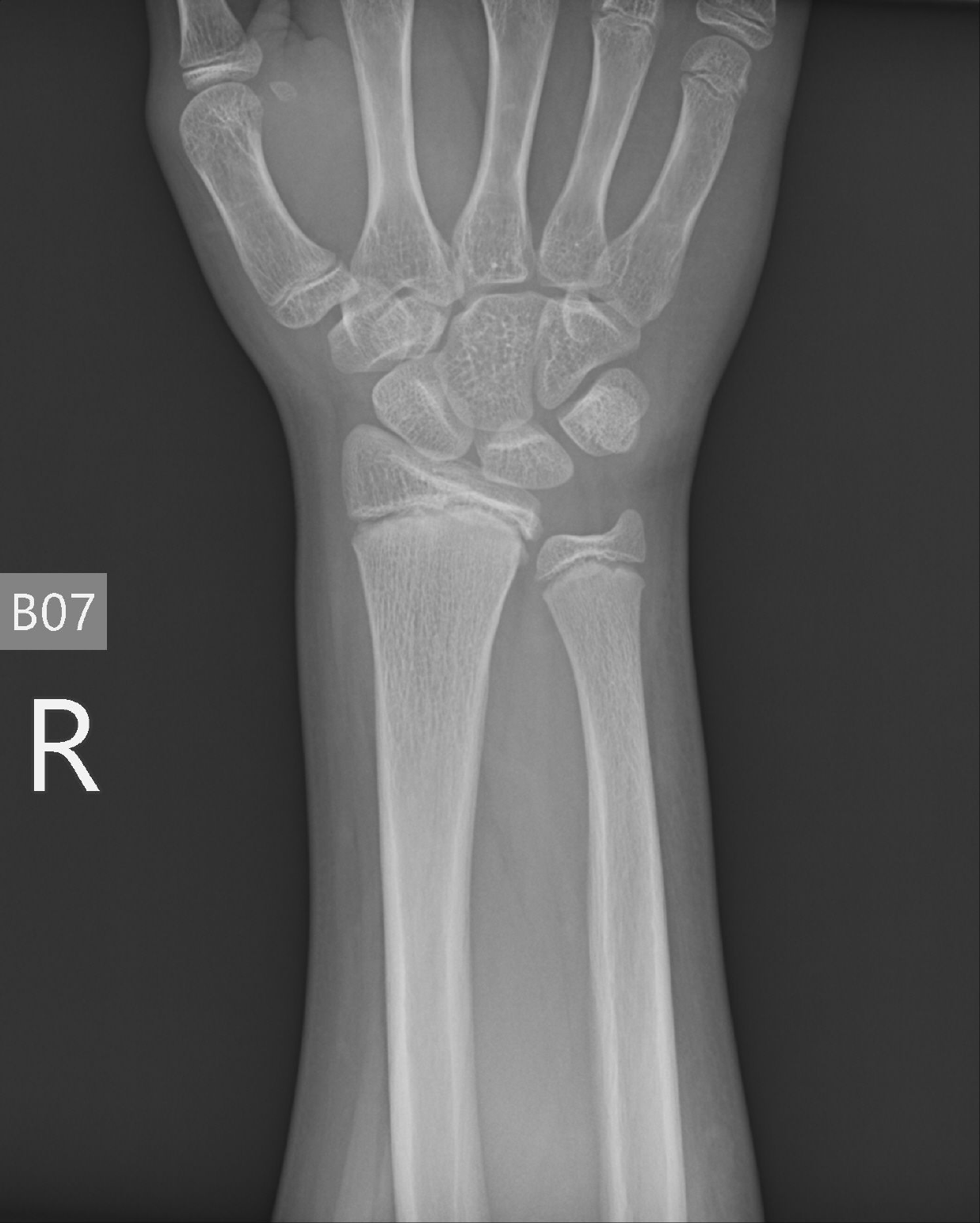}
\includegraphics[width=0.1\textwidth,height=0.1\textwidth]{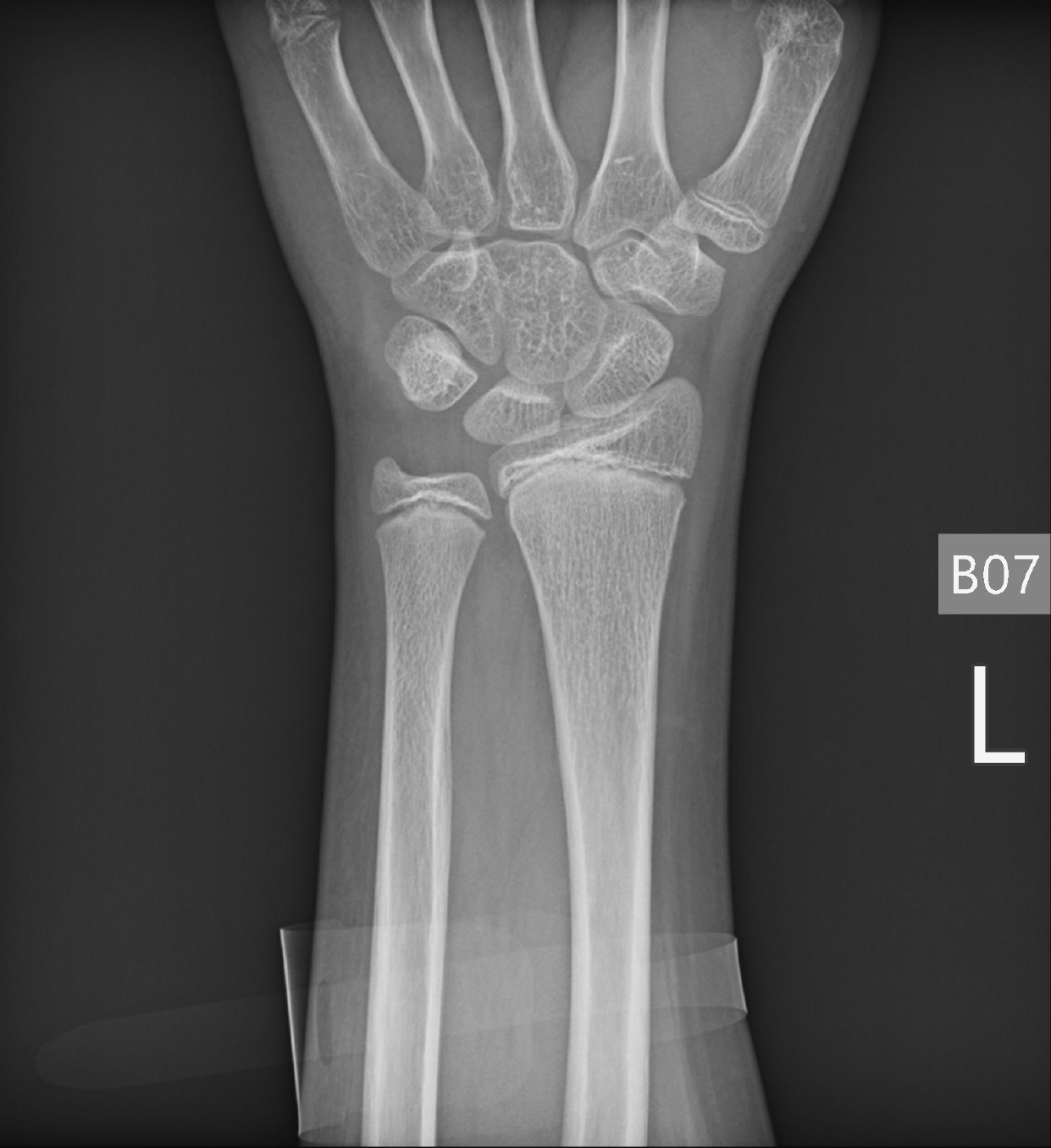}
\includegraphics[width=0.1\textwidth,height=0.1\textwidth]{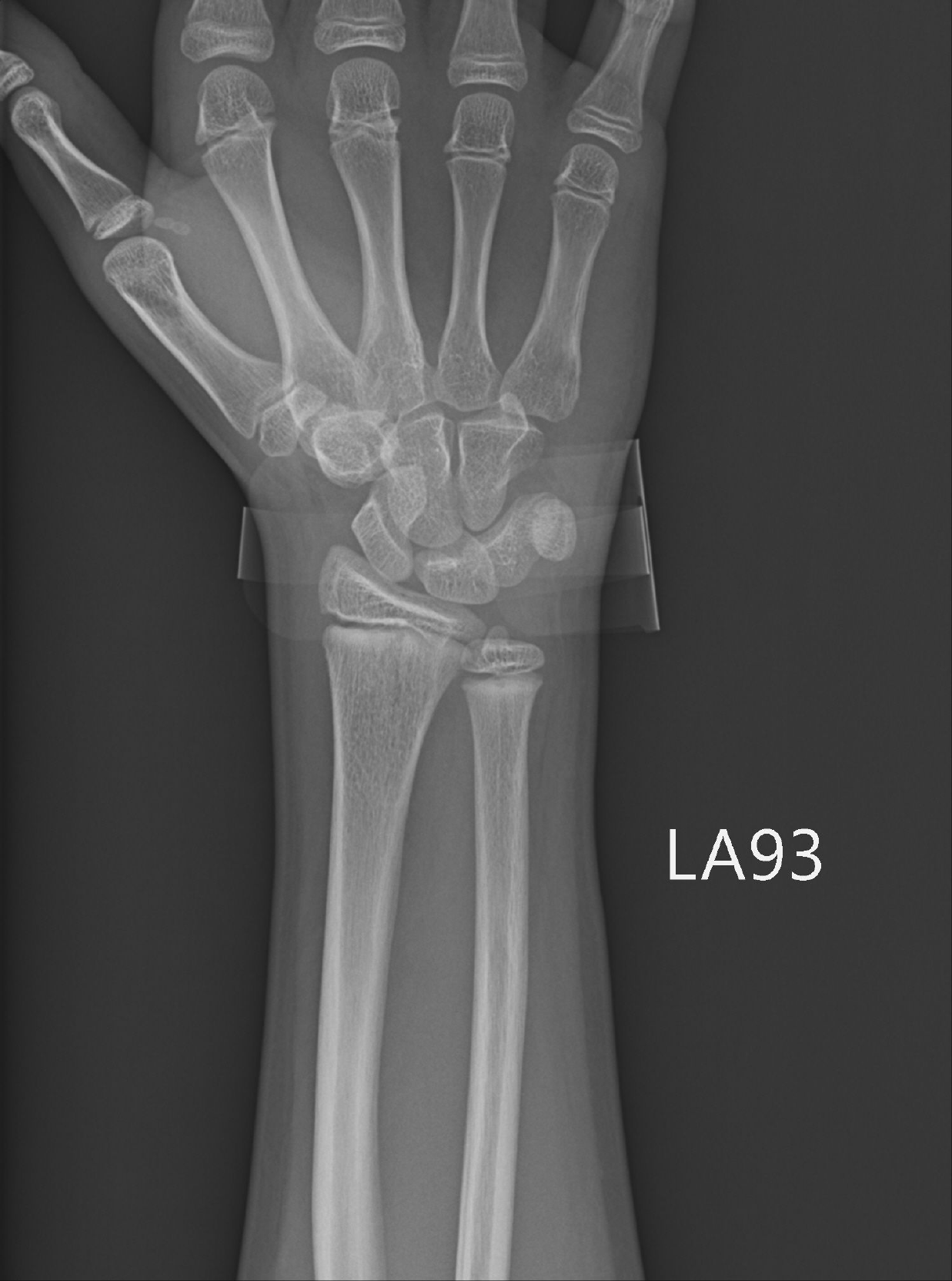}
\includegraphics[width=0.1\textwidth,height=0.1\textwidth]{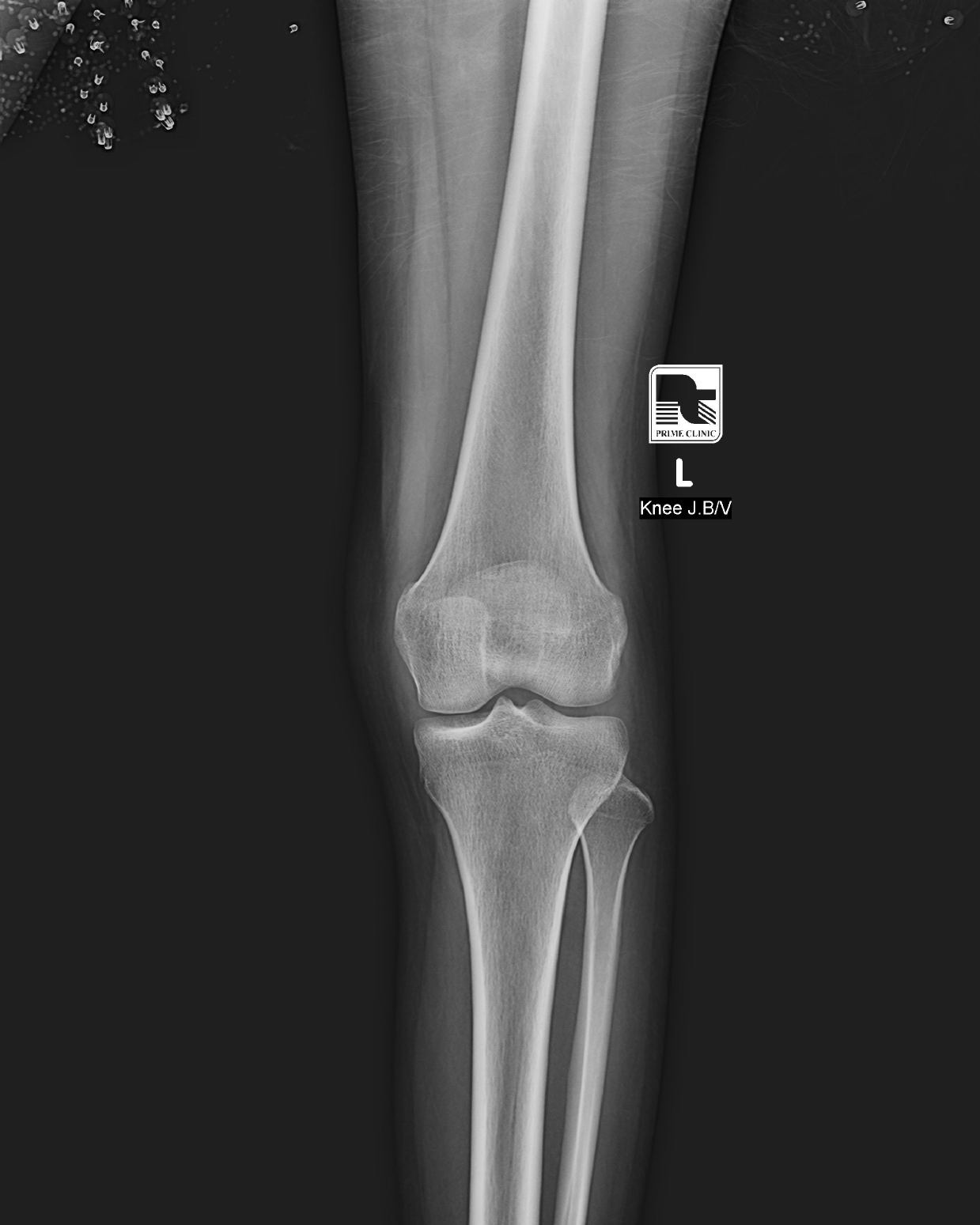}
\includegraphics[width=0.1\textwidth,height=0.1\textwidth]{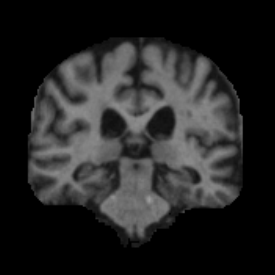}
\includegraphics[width=0.1\textwidth,height=0.1\textwidth]{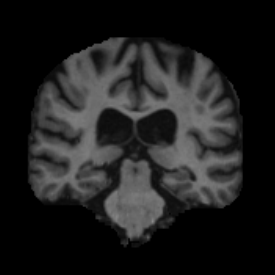}
\includegraphics[width=0.1\textwidth,height=0.1\textwidth]{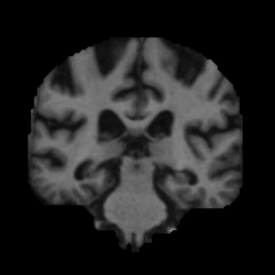}
\includegraphics[width=0.1\textwidth,height=0.1\textwidth]{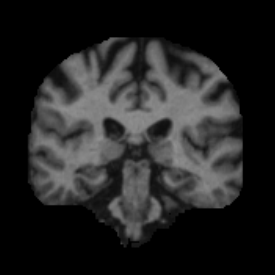}

\caption{Detected outliers for the chest X-ray data.}
\label{TruePositiveXray}
\end{figure}

In Figure \ref{Fig_ROCcurves}, this procedure is evaluated with ROC curves when varying $\alpha$ together with AUC measurements.  
We also considered the liver nuclei dataset, with outliers being images with the cancerous label. 
Note that, for this dataset, there is not enough data to separate between training and calibration, yet we obtain good performances.

  \begin{figure}
    \centering
    \begin{tabular}{ccc}
    \adjustbox{valign=b}{\subfloat[MNIST \label{ROCsubfig-1}]{%
          \includegraphics[width=0.25\textwidth]{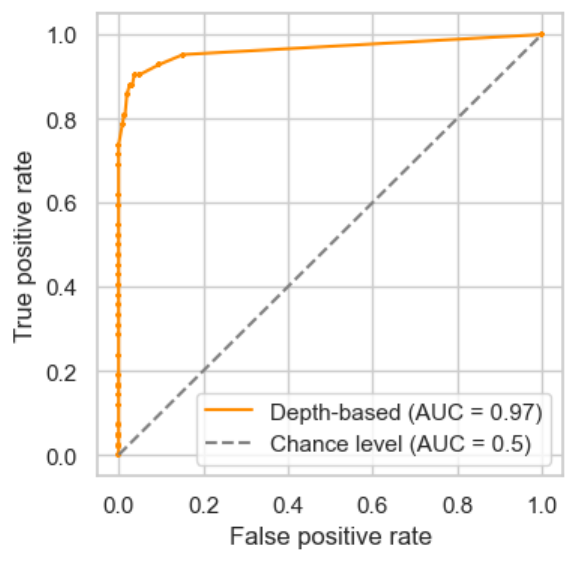}}}
    &   
    \adjustbox{valign=b}{\subfloat[X-ray \label{ROCsubfig-2}]{%
          \includegraphics[width=0.25\textwidth]{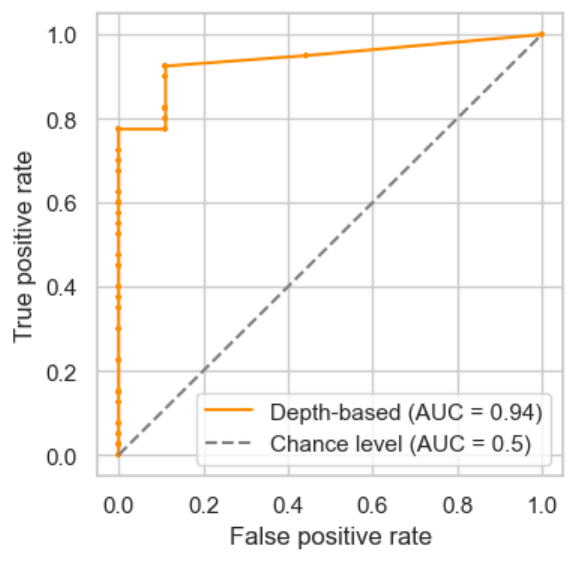}}}
    &
     \adjustbox{valign=b}{\subfloat[Liver nuclei \label{ROCsubfig-3}]{%
          \includegraphics[width=0.25\textwidth]{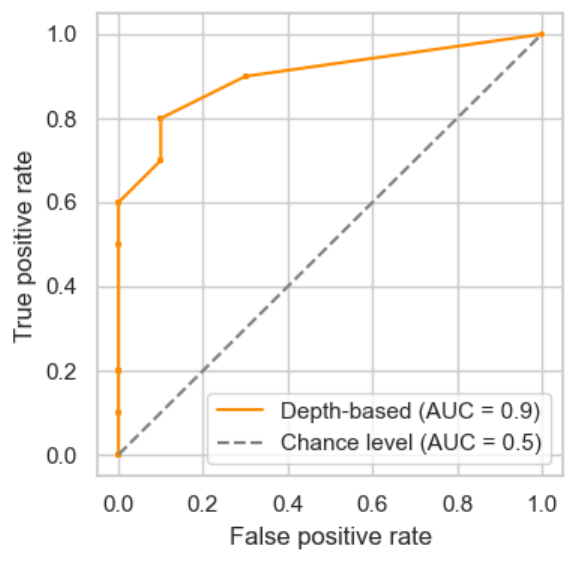}}}
    \end{tabular}
    \caption{ROC curves for outlier detection in three situations.}
    \label{Fig_ROCcurves}
  \end{figure}

\section{Conclusion}

This paper proposes to define MK quantiles and ranks for images by leveraging Linear Optimal Transport.  
Uniform convergence of empirical quantiles holds in the tangent space, as a direct consequence of existing consistency results. 
In our LOT framework, this convergence takes an interesting form due to the relation between LOT embeddings and the Wasserstein distance in image space. 
Finally, our numerical experiments illustrate the potential of this approach for data analysis.
Despite appealing properties, the reconstruction of images from the LOT space is limited in our current implementation. 
It would be interesting to explore whether the use of more involved algorithms \cite{peyre2019computational} than the simplex would yield better representations.

\

\textbf{Acknowledgements:} The author gratefully acknowledges financial support from the Agence Nationale de la Recherche (MaSDOL grant ANR-19-CE23-0017).

%
%

\bibliographystyle{splncs04}
\bibliography{biblioRankingImg}

\end{document}